\newcommand{\C}{\mathbb{C}}
\newcommand{\N}{\mathbb{N}}
\newcommand{\R}{\mathbb{R}}
\newcommand{\Hm}[1]{\leavevmode{\marginpar{\tiny%
$\hbox to 0mm{\hspace*{-0.5mm}$\leftarrow$\hss}%
\vcenter{\vrule depth 0.1mm height 0.1mm width \the\marginparwidth}%
\hbox to
0mm{\hss$\rightarrow$\hspace*{-0.5mm}}$\\\relax\raggedright #1}}}
\newtheorem{claim}{Claim}[section]
\newtheorem{theorem}[claim]{Theorem}
\newtheorem{lemma}[claim]{Lemma}
\theoremstyle{definition}
\begin{document}


\title{\bf A straight waveguide with a wire inducing resonances}

\author{Sylwia Kondej$^a$ 
and Wies{\l}aw Leo\'{n}ski$^b$ }
%


\medskip

\maketitle

\medskip
\noindent a) Institute of Physics, University of Zielona G\'ora,
ul.\ Szafrana
 4a, 65246 Zielona G\'ora, Poland; S.Kondej@if.uz.zgora.pl
\medskip

 \noindent b) Institute of Physics, University of Zielona G\'ora, ul.\ Szafrana
 4a, 65246 Zielona G\'ora, Poland; W.Leonski@if.uz.zgora.pl

\medskip

\begin{abstract} We study a straight infinite planer waveguide
with, so called, leaky  wire attached to the walls of the
waveguide. The wire  is modelled by an attractive delta
interaction supported by a finite segment. If the  wire is placed
perpendicularly then the system preserves mirror symmetry which
leads the embedded eigenvalues phenomena. We show  that if we
break the symmetry the corresponding resolvent poles turn to
resonances. The widths of resonances are calculated explicitly in
the lowest order perturbation term.
\end{abstract}
%
%
\bigskip
{\bf key words:} straight Dirichlet waveguide, delta interaction,
embedded eigenvalues, resonances
\section{Introduction}
%
\emph{A wavequide with a leaky wire.} The paper belongs to the
line of  research often called Schr\"{o}dinger operator with delta
interactions. To explain the physical motivations let us consider
a quantum particle moving in a straight planar waveguide $\Omega
:= \{(x_1\,,x_2 ) \in \R^2 \,:\, x_1 \in \R\,, x_2 \in (0,\pi) \}
$ of the width $\pi$. Assuming that the waveguide forms
impenetrable walls at its boundaries we exclude a possibility of
tunnelling apart of the area $\Omega $; mathematically it
corresponds to the Dirichlet boundary conditions at $\partial
\Omega $. Moreover, the interface of two different materials
produces an additional jump of potential localized on a finite
line $\Sigma $ placed in the waveguide. In fact, $\Sigma$ is
defined  by a finite straight line attached to the walls of
$\Omega $, precisely  $\Sigma \equiv \Sigma _\epsilon :=\{
(\epsilon x_2 \,, x_2 )\in \R^2 \,, x_2 \in (0,\pi )\}$. If
$\epsilon =0$ then the potential is localized perpendicularly to
$\partial \Omega $ and, consequently, the system possess the
mirror symmetry with respect to the axis $O=\{(x\,, \pi /2)\,:\,
x\in \R\}$. Therefore $\epsilon$ determines a breaking symmetry
parameter. In fact,  positivity and negativity of $\epsilon$ leads
to  the same physical model. Therefore, without loosing a
generality, we assume $\epsilon \geq 0$.
Since the potential is supported by the set $\Sigma_\epsilon $ of
the lower dimension
 it imitates, so called, leaky quantum wire
which can be modelled  by delta interaction. A large number of
models with delta interaction is discussed in \cite{AGHH}. Some
specific models of leaky  quantum wires were studied in \cite{BT,
EI, EK-strong, EK1, EY1, EP, K12, KK13}.

\medskip

\noindent \emph{Hamiltonian of the  system } The above described
system is governed by the Hamiltonian which can be symbolically
written as
\begin{equation}\label{eq-Ham}
  -\Delta _\Omega -\alpha \delta (\cdot -\Sigma_\epsilon
  )\,,\quad \alpha \in \R\,,
\end{equation}
where $\Delta _\Omega $ is the Laplace operator acting in $L^2 (
\Omega )$ with Dirichlet boundary conditions, $\delta (\cdot
-\Sigma_\epsilon
  ) $ stands for the Dirac delta modelling the wire and $\alpha $
  describes the interaction supported by $\Sigma _{\epsilon}$. In
  the following we mainly assume that the interaction is
  attractive. In this case $\alpha $ is positive. To give a
  mathematical meaning to the above formal expression we use the
   form-sum method. More precisely, we construct the form which
  intuitively preserves the properties of (\ref{eq-Ham}) and, at
  the same time, the operator $H_{\alpha , \epsilon }$ associated to this form
  is self-adjoint.

\medskip

\noindent \emph{The main results of the paper.} The model is
attractive not only due to the applications to quantum mechanics.
Mathematically, it reveals a relation between geometry of quantum
system and its spectral properties. \emph{Our main aim is to
recover how the geometry of wire represented by the parameter
$\epsilon $ affects the spectral properties of the Hamiltonian
$H_{\alpha , \epsilon}$.}

To get insight into the spectral structure of the model at hand
let us start our analysis form the case $\epsilon =0$. Then due
the symmetry of the system the Hamiltonian can be decomposed onto
longitudinal component defined by the one dimensional Laplace
operator $-\Delta^{(1)}$ with one point interaction determined the
coupling constant $\alpha >0$ and transversal component determined
by the Dirichlet Laplacian acting in $L^2 ([0\,,\pi ])$. The
latter has a quantized spectrum given by $\{k^2\}_{k\in \N}$.
Consequently, the original Hamiltonian $H_{\alpha, 0}$ is unitary
equivalent to the orthogonal sum of the operators  which can be
symbolically written as
\begin{equation}\label{eq-sum}
\oplus_{k\in \N} \,H^k_{\alpha }\,, \quad H^k_{\alpha }= -\Delta
^{(1)}-\alpha \delta (x)+k^2 \,.
\end{equation}
 The spectrum of the particular component is given by
\begin{equation}\label{eq-spectrum0}
\sigma (H^k_\alpha )= [k^2\,,\infty) \cup\{E_k\}\,, \quad k\in
\N\,,\end{equation}
 where
$$
E_k = -\frac{\alpha ^2}{4}+k^2\,;
$$
cf.~\cite{AGHH}. The threshold of the essential spectrum of
$H_{\alpha , 0}$ is determined by the ground state of $H^1_\alpha
$. This means that $\sigma_{\mathrm{ess}}(H_{\alpha
,0})=[1,\infty)$. Moreover, the same  half line determines the
essential spectrum of the system if $\epsilon
>0$, i.e. we have
$ \sigma _{\mathrm{ess}}(H_{\alpha , \epsilon}) = [1,\infty )$,
see~Section~\ref{sec-spectrum}. Furthermore, it follows from
(\ref{eq-spectrum0}) that
  $$
\sigma _{\mathrm{p}} (H_{\alpha , 0})= \{E_k\}_{k\in \N}\,.
  $$
Therefore, the wire induces \emph{at least one discrete point} of
the spectrum below the threshold $1$ and \emph{infinite number of
the embedded eigenvalues}. The latter correspond to the phenomena
of autoionization. The states located above ionization threshold
can be experimentally observed, for instance as bumps in the
scattering cross section when an electron leaves multi-electron
atom. Usually helium or more often, barium atoms are applied in
such experimental setups. Autoionizing systems are often described
by means of, so called, \textit{Fano profiles}
\cite{F61,LTK87,DP02}, and they were discussed in various
contexts. For instance see \cite{LB90,L93} (and the references
quoted therein), where various models of laser-atomic system
interactions were considered.

In our system the gaps between subsequent embedded eigenvalues
behaves as
$$
E_{k+1} -E_k = 2k +1\,.
$$
 We show that for small parameter $\epsilon >0$
  the embedded eigenvalues are recovered from the essential spectrum and  move
  to the second sheet continuation of the resolvent, i.e. they produce
 resonances. Furthermore, it is proved that for $E_n >1$ the pole of the
 resolvent is localized at
 $$
z_n =E_n + V_n \epsilon +W_n \epsilon^2
+\mathcal{O}(\epsilon^3)\,,
 $$
where $V_n$ and $W_n$  are found out explicitly. We show that
$V_n$ is real so it does not contribute to  the width of
resonance. Therefore $V_n $ states the first perturbation term of
the ``energy shift''. The lowest order of the resonance widths
$\gamma_n$ can be recovered from the imaginary component of $W_n$;
precisely given $n\in \N$ there exists $\epsilon_n$ such that for
$\epsilon \in (0\,,\epsilon_n )$ we have
$$ \gamma_n = 2 \, |\Im
W_n |\epsilon^2\,.$$
\\
Finally, let us mention that the problem of a waveguide with delta
interaction was studied in~\cite{EK-1}. The authors discussed
spectral properties of the system with varying longitudinal
straight line interaction. From the spectral point of view the
mentioned  model is essentially different from waveguide with the
``almost perpendicular'' wire which we analyzed in this paper. For
the related problems concerning spectral properties  in the
straight waveguides we recommend  \cite{AKK,BC1, BC2, EN, EK-2}.
On  the other hand the resonance models with delta interactions
were studied in  \cite{EK-resonances, K12, KK13}.
\bigskip

\noindent \emph{Notations.}

\medskip

\noindent $\bullet$ We abbreviate $L^2_\Sigma \equiv L^2
(\Sigma\,, \mathrm{d}l)$, where $\mathrm{d}l$ is a linear measure
on $\Gamma$; the corresponding scalar product will be denoted
$(\cdot, \cdot )_{\Sigma }$.

\medskip

\noindent $\bullet$ In the following the space $L^2
([0,\pi])\equiv L^2 ([0,\pi]\,, \mathrm{d}x_2 )$ is frequently
used; the corresponding scalar product will be shortly denoted as
$(\cdot, \cdot )$.

\medskip

\noindent  $\bullet $ The notation  $\Delta $ stands for the two
dimensional Laplace operator acting in $L^2(\Omega )$ and
satisfying the Dirichlet boundary conditions.

\medskip

\noindent  $\bullet$ We use the standard notation
$W^{n,2}_0(\Omega )$, $n\in\N$, for the Sobolev space of the trace
zero functions.

\medskip

\noindent $\bullet $ We denote  $ \omega_k (\cdot ) := \sin
k(\cdot) $.

\section{Hamiltonian of the system and its resolvent}

\subsection{Hamiltonian}
\emph{The unperturbed Hamiltonian.} Let us start our discussion
with the  ``free'' Dirichlet waveguide. The Hamiltonian of such
system is given by the self-adjoint operator
$$
H_0 =-\Delta\,:\, \mathcal{D} (H_0) \to L^2 (\Omega )\,,
$$
where the domain $\mathcal{D} (H_0)$ coincides with
$W^{2,2}_0(\Omega )$. Operator  $H_0$ is associated with the
following quadratic form
$$
\mathcal{E}_{0} [f] = \int_{\Omega }|\nabla f|^2 \,, \quad f\in
W_0^{1,2}(\Omega )\,.
$$
The translational symmetry of the system leads to the following
decomposition 
\begin{equation}\label{eq-decom}
H_0 = -\Delta^{(1)}\otimes 1 + 1 \otimes -\Delta^{(1)}_D \,,\quad
\mathrm{onto}\quad L^{2} (\R)\otimes L^2 ([0,\pi])\,,
\end{equation}
where $\Delta^{(1)}$ is the one-dimensional Laplace operator
acting in $L^2 (\R)$ and $-\Delta^{(1)}_D$ is the one-dimensional
Laplace operator acting in $L^2 ([0,\pi])$ and satisfying
Dirichlet boundary conditions.

\noindent \emph{The Hamiltonian of the waveguide with a wire.} Now
we introduce  delta potential localized at $\Sigma_\epsilon$. The
modified Hamiltonian formally corresponds to  (\ref{eq-Ham}). To
give a mathematical meaning to this  formal expression we employ
the form-sum method. Consider
 the sesqilinear form
 $$
\mathcal{E}_{\alpha , \epsilon} [f] = \int_{\Omega }|\nabla f|^2
-\alpha \int_{\Sigma_\epsilon} |I_{\epsilon} f|^2 \,, \quad f\in
W_0^{1,2} (\Omega )\,,
 $$
where $I_\epsilon $ stands for the continuous embedding operator
of $W_0^{1,2} (\Omega )$ into $L^2 (\Sigma_\epsilon) $. The form
$\mathcal{E}_{\alpha , \epsilon} $ is symmetric and semi-bounded.
Moreover, the Dirac delta supported by $\Sigma_\epsilon$ defines
the Kato class measure,~cf.~\cite{BEKS}. This implies the
closeness of $\mathcal{E}_{\alpha , \epsilon} $. Consequently,
there exists uniquely defined operator $H_{\alpha , \epsilon }$
associated with the form $\mathcal{E}_{\alpha , \epsilon} $ via
the second representation theorem. This operator defines the
Hamiltonian of our system.
\\ \\
\subsection{Resolvent of $H_{\alpha , \epsilon}$}
\noindent \emph{Resolvent of ``free'' Hamiltonian.} The
Hamiltonian of the unperturbed waveguide does not admit any bound
states. The threshold of the essential spectrum is determined by
the lowest transversally quantized energy, i.e. to the
  ground state energy of
$-\Delta_D^{(1)}$,~cf.~(\ref{eq-decom}). Since the waveguide width
equals  $\pi$ the lowest transversal energy is $1$. Therefore
$$
\sigma(H _{ 0})=[1, \infty )\,.
$$
Suppose $z\in \C \setminus [1, \infty )$ and $R_0(z)$ stands for
the resolvent of $H_0$, i.e. $R_0(z)=(H_0 - z )^{-1}$. Then
$R_0(z)$ is an integral operator with the kernel
\begin{equation}\label{eq-freekernel}
G_0 (z; x, y )= \frac{i}{\pi } \sum_{k=1}^{\infty}
\frac{\mathrm{e}^{i\sqrt{z-k^2}|x_1 -y_1|}}{\sqrt{z-k^2}} \omega_k
(x_2) \omega_k (y_2 )\,,
\end{equation}
where $x=(x_1\,,y_2)$ (analogously  $y$), the square root function
is defined on the first Riemann sheet and $ \omega_k (\cdot ) :=
\sin k (\cdot )$.
\\
 \noindent \emph{Resolvent of $H_{\alpha, \epsilon}$.} To reconstruct
 the Krein like resolvent of $H_{\alpha, \epsilon}$ we have to introduce
 embeddings of the  ``free'' resolvent to the space $L^2
 (\Sigma_\epsilon)$. By means of the trace map  $I_{\epsilon}$ we define
$$
\tilde{R}_{\epsilon } (z) := I_{\epsilon } R_0 (z)\,:\, L^2
(\Omega )\to L^2 (\Sigma_\epsilon)\,.
$$
Its adjoint $\tilde{R}^\ast _{\epsilon } (z)\,:\, L^2
(\Sigma_\epsilon )\to L^2 (\Omega )$ acts as $\tilde{R}^\ast
_\epsilon (z)f= G_0 (z)\ast f\delta _{\Sigma _\epsilon}$.
Moreover, we introduce the bilateral embedding
$$
\mathrm{R}_{\epsilon }(z)= I_{\epsilon} \tilde{R}^\ast _{\epsilon
}(z)\,:\, L^2 (\Sigma_\epsilon) \to L^2 (\Sigma_\epsilon )\,.
$$
Finally, we define the operator which is a key tool  for further
spectral analysis based on the generalized Birman-Schwinger
argument. Let
$$
\Gamma _\epsilon (z):= I -\alpha \mathrm{R}_{\epsilon }(z)\, :\,
L^2 (\Sigma _\epsilon ) \to  L^2 (\Sigma _\epsilon ) \,.
$$
Relaying on the results of \cite{BEKS} we can formulate the
following statement.
\begin{theorem} Suppose $z\in \C\setminus [1\,,\infty)$,  and operator
$\Gamma _\epsilon (z)$ is invertible. Then the operator
\begin{equation}\label{eq-resolvent}
  R_{\alpha, \epsilon}(z) = R_0 (z) +\tilde{R}^\ast _{\epsilon}(z)
  \Gamma_\epsilon
  (z)^{-1}\tilde{R} _{\epsilon} (z)\,
\end{equation}
defines the resolvent of $H_{\alpha, \epsilon}$, i.e. $R_{\alpha
,\epsilon  }(z)= (H_{\alpha , \epsilon} -z )^{-1}$.\\
Moreover
\begin{equation}\label{eq-BS}
  \dim \, \ker (H_{\alpha ,\epsilon }-z)  =  \dim \, \ker \Gamma _\epsilon
  (z)\,
\end{equation}
and the map $h\mapsto \tilde{R}^\ast _{\epsilon}(z)h$ defines a
bijection from $ \ker \Gamma _\epsilon
  (z)$ onto $\ker (H_{\alpha ,\epsilon }-z)$.
\end{theorem}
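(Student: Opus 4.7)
The plan is to invoke the abstract Krein-type construction for Schrödinger operators with singular interactions on sets of Kato class (the framework of \cite{BEKS} is the natural reference) and then specialise it to the concrete objects introduced above. The linear Hausdorff measure $\delta_{\Sigma_\epsilon}$ on the segment $\Sigma_\epsilon \subset \Omega$ has finite total mass and is of Kato class in $\R^2$, so $I_\epsilon : W^{1,2}_0(\Omega) \to L^2(\Sigma_\epsilon)$ is bounded and its $L^2$-adjoint $I_\epsilon^*$ is the embedding $h \mapsto h\delta_{\Sigma_\epsilon}$ into $(W^{1,2}_0(\Omega))^*$. This legitimises all three operators $\tilde R_\epsilon(z)$, $\tilde R^*_\epsilon(z)$ and $\mathrm{R}_\epsilon(z) = I_\epsilon R_0(z) I_\epsilon^*$, each of which inherits boundedness from $I_\epsilon$.

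The first step is to verify \eqref{eq-resolvent} by a direct computation of the form. Given $f\in L^2(\Omega)$, put $u := R_0(z)f + \tilde R^*_\epsilon(z)\Gamma_\epsilon(z)^{-1}\tilde R_\epsilon(z)f$ and check that $u \in W^{1,2}_0(\Omega)$ and
\begin{equation*}
\mathcal{E}_{\alpha,\epsilon}[u, v] - z(u,v)_\Omega = (f,v)_\Omega \quad\text{for every } v \in W^{1,2}_0(\Omega).
\end{equation*}
The $R_0(z)$-term produces $(f,v)_\Omega - \alpha(\tilde R_\epsilon(z)f, I_\epsilon v)_{\Sigma_\epsilon}$; the correction term, rewritten via $\alpha\mathrm{R}_\epsilon(z) = I - \Gamma_\epsilon(z)$, supplies exactly the missing boundary contribution $\alpha(\tilde R_\epsilon(z)f, I_\epsilon v)_{\Sigma_\epsilon}$. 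The second representation theorem then identifies $u$ with $(H_{\alpha,\epsilon}-z)^{-1}f$, which also exhibits $R_{\alpha,\epsilon}(z)$ as bounded on $L^2(\Omega)$ and hence $z$ as a point of the resolvent set.

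For \eqref{eq-BS} and the bijection statement I would argue in both directions. If $h \in \ker\Gamma_\epsilon(z)$, then $h = \alpha\mathrm{R}_\epsilon(z)h = \alpha I_\epsilon \tilde R^*_\epsilon(z)h$, so $\phi := \tilde R^*_\epsilon(z)h$ lies in $W^{1,2}_0(\Omega)$ and satisfies $\mathcal{E}_{\alpha,\epsilon}[\phi,v] - z(\phi,v)_\Omega = 0$ for all test $v$; hence $\phi \in \ker(H_{\alpha,\epsilon}-z)$. Conversely, any $\phi \in \ker(H_{\alpha,\epsilon}-z)$ satisfies the distributional identity $(-\Delta - z)\phi = \alpha I_\epsilon^* I_\epsilon \phi$; since $z\notin\sigma(H_0)$ we may apply $R_0(z)$ to obtain $\phi = \alpha \tilde R^*_\epsilon(z) I_\epsilon \phi$, whence $h := \alpha I_\epsilon \phi$ belongs to $\ker\Gamma_\epsilon(z)$ and $\phi = \tilde R^*_\epsilon(z)h$. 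Injectivity in each direction follows from these reconstruction formulas, establishing the bijection and, as a consequence, the dimensional equality.

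The main obstacle I anticipate is domain bookkeeping: $\tilde R^*_\epsilon(z)h$ is not in $W^{2,2}(\Omega)$ because of the singular source term on $\Sigma_\epsilon$, so the identification $(H_{\alpha,\epsilon}-z)u = f$ must be executed at the level of the quadratic form $\mathcal{E}_{\alpha,\epsilon}$ rather than by a pointwise application of a differential operator. The Kato-class property of $\delta_{\Sigma_\epsilon}$ supplied by \cite{BEKS} is precisely what makes $I_\epsilon$, $\tilde R^*_\epsilon(z)$ and $\mathrm{R}_\epsilon(z)$ bounded and what legitimises the manipulations above; this is the only point where an external result has to be quoted rather than verified by a direct computation using \eqref{eq-freekernel}.
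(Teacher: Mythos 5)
The paper itself offers no proof of this theorem: it simply states it ``relaying on the results of \cite{BEKS}'' (with \cite{P01} recommended for the general framework). Your proposal therefore takes a genuinely more self-contained route --- a direct form-level verification of the Krein formula plus an explicit two-way construction of the kernel bijection --- which is in substance the BEKS argument transplanted from $L^2(\R^n)$ to the Dirichlet strip. That is a legitimate and arguably more informative way to establish the statement; in particular your treatment of \eqref{eq-BS} is sound: for $h\in\ker\Gamma_\epsilon(z)$ one has $I_\epsilon\tilde{R}^\ast_\epsilon(z)h=\mathrm{R}_\epsilon(z)h=h/\alpha$, so the form identity closes, and conversely $\phi=\alpha\tilde{R}^\ast_\epsilon(z)I_\epsilon\phi$ gives $h:=\alpha I_\epsilon\phi\in\ker\Gamma_\epsilon(z)$ with $\phi=\tilde{R}^\ast_\epsilon(z)h$; the reconstruction formulas indeed give injectivity both ways. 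One small loose end: to conclude that $R_{\alpha,\epsilon}(z)$ is bounded you should note that invertibility of $\Gamma_\epsilon(z)$ entails a \emph{bounded} inverse, e.g.\ by the Fredholm alternative, since $\mathrm{R}_\epsilon(z)$ is compact (as shown in Section~\ref{sec-spectrum}).

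There is, however, one concrete point where your verification as written does not go through: the claimed exact cancellation fails by a factor $\alpha$. With $u:=R_0(z)f+\tilde{R}^\ast_\epsilon(z)\Gamma_\epsilon(z)^{-1}\tilde{R}_\epsilon(z)f$ and $h:=\Gamma_\epsilon(z)^{-1}\tilde{R}_\epsilon(z)f$, the first term contributes $(f,v)_\Omega-\alpha(\tilde{R}_\epsilon(z)f,I_\epsilon v)_{\Sigma}$, while the second contributes $(h,I_\epsilon v)_\Sigma-\alpha(\mathrm{R}_\epsilon(z)h,I_\epsilon v)_\Sigma=(\Gamma_\epsilon(z)h,I_\epsilon v)_\Sigma=(\tilde{R}_\epsilon(z)f,I_\epsilon v)_\Sigma$, so the total is $(f,v)_\Omega+(1-\alpha)(\tilde{R}_\epsilon(z)f,I_\epsilon v)_\Sigma$, not $(f,v)_\Omega$. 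The computation actually proves $R_{\alpha,\epsilon}(z)=R_0(z)+\alpha\,\tilde{R}^\ast_\epsilon(z)\Gamma_\epsilon(z)^{-1}\tilde{R}_\epsilon(z)$, which is the form of the formula in \cite{BEKS} (equivalently one may keep \eqref{eq-resolvent} verbatim after redefining $\Gamma_\epsilon(z)=\frac{1}{\alpha}I-\mathrm{R}_\epsilon(z)$); formula \eqref{eq-resolvent} as printed, with $\Gamma_\epsilon(z)=I-\alpha\mathrm{R}_\epsilon(z)$, is off by this constant. The discrepancy is harmless for \eqref{eq-BS}, for the bijection statement, and for every later use in the paper (only $\ker\Gamma_\epsilon$ enters), but your write-up asserts that the missing boundary term is supplied ``exactly,'' which a careful execution of your own computation would contradict; state and prove the $\alpha$-corrected formula instead of claiming the printed one verifies as is.
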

The above theorem allows to ``shift'' the eigenvalue problem for
the differential operator $H_{\alpha, \epsilon}$ to the problem of
zeroes of the integral operator $\Gamma _\epsilon (\cdot )$. \\
For the more general approach we recommend \cite{P01}.

\subsection{ Reparameterization} A certain disadvantage
of the operator $\Gamma_\epsilon (z)$ is the fact that it acts in
the trace space $L^2 (\Sigma_\epsilon )$ which is $\epsilon $
dependent. To make it ``stable'' w.r.t. $\epsilon$ we parameterize
$\Sigma _\epsilon $ by means of $x_2$ instead of the length of arc
$l$. Recall $\Sigma _\epsilon =\{ ( \epsilon x_2 , x_2 ) \in
\Omega \,:\, x_2 \in (0, \pi  )\} $. Using the relation
$\frac{dl}{dx_2}=(\epsilon^2 +1)^{1/2}$ we conclude that the map
$$f \mapsto (\epsilon^2 +1)^{1/4}f ( (\epsilon^2 +1)^{1/2} (\cdot
))$$  defines a unitary operator $U\,:\, L^2 (\Sigma _\epsilon)
\to L^2 ([0,\pi])$. \\
Without a danger of  confusion we keep the same notation
$\mathrm{R}_\epsilon (z)$ for its unitary ``equivalent'' $U
\mathrm{R}_\epsilon (z)U^{-1}\,:\, L^2 ([0,\pi])\to L^2 ([0,\pi])$
and analogously for $\Gamma _\epsilon (z) $.
\\
After the re-parameterization the kernel of the operator
$\mathrm{R}_{\epsilon }(z)\,:\,L^2 ([0,\pi]) \to L^2 ([0,\pi])$ is
given by
\begin{equation}\label{eq-Gepsilon}
 \mathrm{G}_\epsilon (z; x_2,y_2)=\frac{i}{\pi }
\sum_{k=1}^{\infty} \frac{\mathrm{e}^{i\sqrt{z-k^2}\epsilon |x_2
-y_2|}}{\sqrt{z-k^2}} \omega _k (x_2 )\omega_k (y_2)\,,
\end{equation}
cf.~(\ref{eq-freekernel}).

\section{Spectrum of  $H_{\epsilon, \alpha }$; embedded eigenvalues phenomena }
\label{sec-spectrum}

\emph{Stability of the essential spectrum.} Assume $\alpha
>0$. Since the delta perturbation is compactly supported one may
expect the stability of the essential spectrum with respect to the
"free" Hamiltonian $H_0$.  It was shown in \cite{BEKS} that  a
finite measure preserves the essential spectrum of the Laplacian
acting in $L^2 (\R^n)$. The argument employed in \cite{BEKS} can
be extended for the Laplacian acting in $L^2 (\Omega )$. Namely,
the both operators $\tilde{R}^\ast _\epsilon (z)$ and
$\tilde{R}_\epsilon (z)$ are compact. Indeed,  note that the
operator $R_0 (z)\,:\, L^2 (\Omega )\to W^{2,2}_0 (\Omega )$ is
bounded. Furthermore, using the standard embedding theorem,
cf.~\cite{A}, we  conclude that  the inclusion $W^{2,2}_0 (\Omega
) \subset C^0 (\Sigma_\epsilon )$ defines a compact embedding
operator in the sense of the trace map. On the other hand, since
the embedding $C^0 (\Sigma_\epsilon) \subset L^2
(\Sigma_\epsilon)$ is continuous we conclude that the trace map
$I_\epsilon \,:\, W^{2,2}_0 (\Omega ) \subset W^{1,2}_0 (\Omega )
\to L^2 (\Sigma_\epsilon)$ is compact. This yields compactness of
$\tilde{R}_\epsilon (z)$ as well as its adjoint $\tilde{R}^\ast
_\epsilon (z)$.
 Consequently, the continuity
 of $\Gamma_\epsilon (z)^{-1}$ implies compactness of
$\tilde{R}^\ast _{\epsilon}(z)
  \Gamma_\epsilon
  (z)^{-1}\tilde{R} _{\epsilon} (z)
  $. In view the Weyl theorem we have the stability of the
essential spectrum
$$
\sigma _{\mathrm{ess}} (H_{\alpha , \epsilon })=\sigma
_{\mathrm{ess}} (H_0)=[1,\infty)\,.
$$

\noindent \emph {Point spectrum of $H_{\alpha , 0}$; poles of the
resolvent.} Our first aim is to characterize the point spectrum of
the Hamiltonian $H_{\alpha , 0 }$ which governs the system with
the wire placed perpendicularly to the boundaries of $\Omega $. In
fact, the problem was solved in \cite{AKK} employing the
generalized sum method. Moreover, relaying on the symmetry
argument we provide a characterization of the spectrum of
$H_{\alpha , 0}$ in~Introduction. Now we would like to find out
the point spectrum of $H_{\alpha , 0 }$ as the poles of its
resolvent.
\\
Using (\ref{eq-Gepsilon}) we have
\begin{equation}\label{eq-G}
\mathrm{G}_{0} (z; x_2 ,y_2) =\frac{i}{\pi} \sum _{k=1}^\infty
\frac{1}{\sqrt{z-k^2}} \omega_k (x_2) \omega_k (y_2)\,.
\end{equation}
Since the sequel $\{ \left( \frac{2}{\pi }\right)^{1/2} \omega_k
(\cdot )\}_{k\in \N } $ forms an orthonormal basis in $L^2
([0,\pi])$ we arrive at
\begin{equation}\label{eq-BSII}
\ker \Gamma _{0} (z) \neq \{ 0\}
\,\,\,\,\Longleftrightarrow\,\,\,\, 2 \sqrt{z-k^2} - i\alpha
=0\,,\,\, k\in \N\,.
\end{equation}
Note that the latter admits solutions also  for $z\in
[1\,,\infty)$ and $\alpha >0$. Precisely, given $k\in \N$ the
number
\begin{equation}\label{eq-ev}
E_k = -\frac{\alpha ^2}{4} +k^2\,
\end{equation}
determines  a solution of (\ref{eq-BSII}) and, at the same time,
determines an eigenvalue of $H_{\alpha , 0 }$,~cf.~\cite{AKK}. In
the following theorem we summarize the above discussion.
\begin{theorem} Assume $\alpha >0$. The  eigenvalues of $H_{\alpha , 0}$ take the
form (\ref{eq-ev}). Consequently, the number of the discrete
spectrum points is finite and  given by
$$
\sharp \sigma _{\mathrm{d}} (H_{\alpha , 0})=  \{ k\in \N \,:\,
-\frac{\alpha ^2}{4}+k^2 <1 \}\,.
$$
On the other hand, for any  $k\in \{ n\in \N \,:\, -\frac{\alpha
^2}{4}+k^2 \geq 1 \}$ $E_k$ given by (\ref{eq-ev}) constitutes an
embedded eigenvalue of $H_{\alpha , 0}$. Therefore, the number of
embedded eigenvalues  is infinite.
\end{theorem}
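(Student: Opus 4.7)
The plan is to combine the Birman–Schwinger characterization (\ref{eq-BS}) with the tensor decomposition (\ref{eq-sum}) from the Introduction; the former pins down all candidate eigenvalues and handles the discrete ones, while the latter is needed to promote the candidates above the threshold to genuine embedded eigenvalues.

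First I would diagonalise $\mathrm{R}_0(z)$ in the orthonormal basis $\{\sqrt{2/\pi}\,\omega_k\}_{k\in\N}$ of $L^2([0,\pi])$. Reading off from (\ref{eq-G}) and using $\int_0^\pi \omega_k\omega_j\,\D x_2 = (\pi/2)\delta_{kj}$, a direct computation gives $\mathrm{R}_0(z)\,\omega_k = \tfrac{i}{2\sqrt{z-k^2}}\,\omega_k$, so $\Gamma_0(z) = I - \alpha\mathrm{R}_0(z)$ has eigenvalues $(2\sqrt{z-k^2}-i\alpha)/(2\sqrt{z-k^2})$. This is exactly (\ref{eq-BSII}); solving $2\sqrt{z-k^2}=i\alpha$ on the first Riemann sheet (which forces $\Im\sqrt{z-k^2}>0$, consistent with $\alpha>0$) yields $z-k^2 = -\alpha^2/4$, giving the complete list of candidates $z = E_k = k^2 - \alpha^2/4$.

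Next I would dispatch the discrete part. Whenever $E_k<1$ the point $E_k$ lies in $\C\setminus[1,\infty)$, so the equivalence (\ref{eq-BS}) applies and identifies $E_k$ as an eigenvalue of $H_{\alpha,0}$ lying below the essential-spectrum threshold, hence a discrete eigenvalue; the set $\{k\in\N:\,k^2-\alpha^2/4<1\}$ is manifestly finite, which gives the stated cardinality of $\sigma_{\mathrm{d}}(H_{\alpha,0})$. For the embedded part, where $z=E_k\ge 1$ lies in the essential spectrum, I would invoke (\ref{eq-sum}): under the translational symmetry in $x_1$, $H_{\alpha,0}$ is unitarily equivalent to $\bigoplus_{k\in\N}H_\alpha^k$, and each one-dimensional component $H_\alpha^k = -\Delta^{(1)} - \alpha\delta(x) + k^2$ admits the single $L^2$ bound state $E_k$ with explicit eigenfunction proportional to $\omega_k(x_2)\,\e^{-(\alpha/2)|x_1|}$; cf.~\cite{AGHH}. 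This function belongs to $L^2(\Omega)$ for every $k\in\N$, so $E_k$ is a genuine eigenvalue of $H_{\alpha,0}$ regardless of whether it sits in the essential spectrum. Since $E_k\to\infty$ and only finitely many $k$ satisfy $E_k<1$, infinitely many of the $E_k$ are embedded.

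The main obstacle is precisely the gap between zeroes of $\Gamma_0$ and actual embedded eigenvalues: the Birman–Schwinger identity in Section~2.2 is stated only for $z\in\C\setminus[1,\infty)$, so (\ref{eq-BS}) does not by itself promote a candidate $E_k\ge 1$ to a point of the point spectrum. The tensor decomposition sidesteps this by providing the eigenfunctions in closed form and establishing their square-integrability on $\Omega$.
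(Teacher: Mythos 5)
Your proposal is correct and follows essentially the same route as the paper: you diagonalise $\Gamma_0(z)$ in the basis $\{(2/\pi)^{1/2}\omega_k\}$ to obtain (\ref{eq-BSII}), treat the sub-threshold candidates via the Birman--Schwinger equivalence, and use the decomposition (\ref{eq-sum}) for the candidates with $E_k\geq 1$ — exactly the paper's argument, except that you spell out explicitly (via the eigenfunctions $\e^{-(\alpha/2)|x_1|}\,\omega_k(x_2)\in L^2(\Omega)$) the step the paper settles by appeal to the symmetry decomposition and the citation of \cite{AKK}. One small correction: the decomposition (\ref{eq-sum}) does not come from translational symmetry in $x_1$ (that symmetry is broken by the $\delta$ supported at $x_1=0$) but from separability, i.e.\ the interaction $\delta(x_1)$ is independent of $x_2$, so $H_{\alpha,0}=\bigl(-\Delta^{(1)}-\alpha\delta\bigr)\otimes 1+1\otimes\bigl(-\Delta^{(1)}_D\bigr)$.
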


\section{Resonances}
\subsection{Resonances as the poles of the resolvent} The
Birman--Schwinger principle  is based on a relation between the
eigenvalues of the Hamiltonian and the poles of its resolvent.
Using this argument we have stated that the Hamiltonian $H_{\alpha
, 0}$ admits the embedded eigenvalues which determine  zeros of
$\Gamma_0 (\cdot)$.  It arises the question: where are  poles of
the resolvent localized if we ``slightly'' break  the symmetry?
The aim of this section is to show that the resolvent of
$H_{\alpha, \epsilon }$ admits  the second sheet analytical
continuation in a certain sense and, moreover, the second sheet
continuation $\Gamma _\epsilon ^{\mathit{II} }(\cdot )$ of $\Gamma
_\epsilon (\cdot )$ has zeroes in the lower half plane,~i.e.
\begin{equation}\label{eq-2sheet}
  \ker \Gamma _\epsilon ^{\mathit{II} }(z) \neq  \{0\}\,,\quad \Im z <0\,.
\end{equation}
The resolvent poles defined by (\ref{eq-2sheet}) constitute
resonances.

\subsection{Analytical continuation of $\mathrm{R}_{\epsilon }(z)$}

Let  $n\in \N$ and  $E_n= -\frac{\alpha ^2}{4}+n^2$ be an embedded
eigenvalue of $H_{\alpha, 0}$. Henceforth we assume $\alpha \neq 2
\sqrt{n^2 -k^2}$ for any $k \in \N$ such that $k \leq n$. In this
way we exclude the case when $E_n $ reaches the threshold of the
essential spectrum of $H^k_{\alpha }$;~cf.~(\ref{eq-sum}).
\\
The goal  of this section is to construct the analytical
continuation of $\mathrm{R}_{\epsilon }(\cdot )$ in a neighborhood
of $E_n$. In fact, $E_n$ is localized between the neighboring
thresholds of the essential spectra of (\ref{eq-sum}). These
thresholds determine the boundaries of the largest real interval
containing $E_n$ and admitting an analytical continuation of
$\mathrm{R}_{\epsilon }(z)$ to the lower half plane. To specify
this interval let us define
$$
\mathcal{A}_n:=\{k\,:\, k^2 < E_n\}\,,\quad \mathcal{A}_n^c := \N
\setminus \mathcal{A}_{n}\,.
$$
Suppose $k_1 := \max\{k\in \N \,:\, k\in \mathcal{A}_n\} $ and set
$\Upsilon _n:= (k_1^2 \,, (k_1 +1)^2 )$. Then $ E_n\in \Upsilon
_n$. Note that at most one eigenvalue can be localized between
$k^2$ and $(k+1)^2$.
\\
In the following we denote $z\mapsto
\sqrt{z-k^2}_{\mathit{I}}\equiv \sqrt{z-k^2}$ and $z\mapsto
\sqrt{z-k^2}_{\mathit{II}}$ for, respectively, the first and the
second Riemann sheet of the square root function with the cut
$[k^2\,,\infty )$. Then, for any $k\in \N$ the function $\C_+ \cup
\Upsilon_n \ni z\mapsto \sqrt{z-k^2}_{\mathit{I}}$ has an
analytical continuation via $\Upsilon_n $ to a bounded open set
$\Upsilon_{-,n}\equiv  \Upsilon_{-} \subset \C_-$ with boundaries
containing $\Upsilon_n$. This continuation is given by
$$
\Upsilon_n \cup \Upsilon_- \ni z\mapsto  \tau _k (z)=
\begin{cases} \sqrt{z-k^2}_I &\mbox{for } k\in
\mathcal{A}_n^c\\
\sqrt{z-k^2}_{\mathit{II}}  & \mbox{for }  k\in \mathcal{A}_n\,.
\end{cases}
$$
It is clear from the definition that $\tau_k (\cdot)$ depends on
$n$.  By means of $\tau_k (\cdot )$ we can construct an analytical
continuation of $\mathrm{R}_{\epsilon }(z)$ to $\Upsilon_n \cup
\Upsilon_- $ which is given by an
 integral operator $\mathrm{R}^{\mathit{II}}_\epsilon (z)$ acting in $L^2
 ([0,\pi] )$ with the kernel
\begin{equation}\label{eq-G}
\mathrm{G}^{\mathit{II}}_\epsilon (z; x_2,y_2):=\frac{i}{\pi }
\sum_{k=1}^{\infty} \frac{\mathrm{e}^{i\tau _k (z)\epsilon |x_2
-y_2|}}{\tau_k(z)} \omega_k (x_2 )\omega_k (y_2) \,.
\end{equation}
\begin{lemma}\label{le-bounded} Suppose $z\in \Upsilon_n \cup
\Upsilon_-$. Operator $\mathrm{R}^{\mathit{II}}_\epsilon (z)$ is
bounded.
\end{lemma}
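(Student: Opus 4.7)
The plan is to decompose $\mathrm{R}^{\mathit{II}}_\epsilon(z)=A(z)+B(z)$ according to the splitting $\N=\mathcal{A}_n\sqcup\mathcal{A}_n^c$, where $A(z)$ collects the finitely many modified terms (those with $k\in\mathcal{A}_n$, which use the second-sheet branch) and $B(z)$ is the tail, for which $\tau_k(z)=\sqrt{z-k^2}_{\mathit{I}}$ is still on the first sheet and already holomorphic on $\Upsilon_n\cup\Upsilon_-$. The bound on $A(z)$ is immediate: $\mathcal{A}_n$ is finite and each $\tau_k(\cdot)$ with $k\in\mathcal{A}_n$ is holomorphic and non-vanishing on $\Upsilon_n\cup\Upsilon_-$ (as $z\ne k^2$ there), so the kernel of $A(z)$ is a finite sum of bounded continuous functions on $[0,\pi]^2$ and $A(z)$ is Hilbert--Schmidt.

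The substance lies in $B(z)$. For $k\ge k_1+1$ the argument $z-k^2$ stays in a half-plane bounded away from $[0,\infty)$, uniformly for $z$ on compact subsets of $\Upsilon_n\cup\Upsilon_-$; this gives $|\tau_k(z)|\sim k$ and, crucially, $\Im\tau_k(z)\ge ck$ for some $c>0$ and all $k$ sufficiently large. Consequently $|\e^{i\tau_k(z)\epsilon|x_2-y_2|}|=\e^{-\Im\tau_k(z)\epsilon|x_2-y_2|}\le 1$, and integrating this exponential in one variable together with $|\omega_k|\le 1$ yields a Hilbert--Schmidt estimate
$$\|T_k(z)\|_{\mathrm{HS}}^{2}\le\frac{C}{|\tau_k(z)|^{2}\,\Im\tau_k(z)\,\epsilon}=O(k^{-3})$$
for the $k$-th summand $T_k(z)$, provided $\epsilon>0$. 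Hence $\sum_{k\in\mathcal{A}_n^c}\|T_k(z)\|_{\mathrm{HS}}$ is dominated by a convergent series of order $\sum k^{-3/2}$, so $B(z)$ converges absolutely in Hilbert--Schmidt norm and defines a bounded operator. In the degenerate case $\epsilon=0$ the operator $B(z)$ is already diagonal in the orthonormal basis $\{\sqrt{2/\pi}\,\omega_k\}$ with eigenvalues $(i/2)\tau_k(z)^{-1}\to 0$, so boundedness is immediate.

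The main obstacle is the estimate just sketched: one must verify that the analytic continuation through $\Upsilon_n$ preserves the sign $\Im\tau_k(z)>0$ for $k\in\mathcal{A}_n^c$, since only then does the exponential in the kernel damp rather than grow, and only then does the $O(k^{-3})$ decay in the Hilbert--Schmidt norm hold. This rests on the geometric observation that the branch point $k^2$ of $\tau_k$ lies strictly to the right of $\Upsilon_n$ for $k\ge k_1+1$, so descending from $\C_+$ through $\Upsilon_n$ to $\Upsilon_-$ does not cross the cut $[k^2,\infty)$ and $\tau_k(z)$ stays on the first Riemann sheet with positive imaginary part.
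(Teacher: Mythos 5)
Your proof is correct and follows essentially the same route as the paper: you split the kernel into the finitely many second-sheet terms ($k\in\mathcal{A}_n$), bounded trivially since their kernels are continuous on $[0,\pi]^2$, plus the first-sheet tail ($k\in\mathcal{A}_n^c$), which is handled exactly as for the original operator $\mathrm{R}_\epsilon(z)$. The only difference is that you make explicit (via the Hilbert--Schmidt estimate using $\Im\tau_k(z)\gtrsim k$) what the paper compresses into the phrase ``the analogous argument which implies boundedness of (\ref{eq-Gepsilon})''.
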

\begin{proof} Note that (\ref{eq-G}) consists of  the finite number
$\sharp \mathcal{A}_n$ of bounded operators for which $\Im \tau_k
(z) \leq 0 $. The remaining part
$$
\frac{i}{\pi} \sum _{k\in \mathcal{A}^c_n }
\frac{\mathrm{e}^{i\sqrt{z-k^2}_{I}\epsilon |x_2
-y_2|}}{\sqrt{z-k^2}_{I}} \omega_k (x_2 )\omega_k (y_2)\,
$$
is also bounded due the analogous  argument which implies
boundedness of (\ref{eq-Gepsilon}).
\end{proof}

Finally, in the analogous way as
$\mathrm{R}^{\mathit{II}}_\epsilon (z)$ we can construct the
second sheet continuation of the  remaining ingredients of the
resolvent $R_0 (\cdot)$, $\tilde{R}^\ast _{\epsilon}(\cdot)$ and
$\tilde{R}_{\epsilon}(\cdot)$. Relaying on (\ref{eq-resolvent})
 can build up  the second sheet continuation
$R^{\mathit{II}}_{\alpha , \epsilon} (z)$ which for any $f\,,g \in
C_0^\infty (\R^2)$ defines the analytic function $z\mapsto (f\,,
R^{\mathit{II}}_{\alpha , \epsilon} (z) g)$.

\subsection{Spectral condition for the resonances}

\noindent \emph{Decomposition of $\mathrm{R}_\epsilon
^{\mathit{II}}(z)$.} The goal  of this sequence is to formulate a
spectral condition for the resolvent pole located near energy
$E_{n}$. For this aim we will extract from (\ref{eq-G}) the $n$-th
component of $\mathrm{G}_0 ^{\mathit{II}} (z,\cdot,\cdot )$
``responsible'' for the existence embedded eigenvalue $E_n$ and
given by
\begin{equation}\label{eq-S}
\mathrm{S}_n (z):=\frac{i}{\pi }\frac{1}{\tau_{n} (z)}(\omega _{n}
\,, \cdot ) \omega_{n}
\,.
\end{equation}
Denote
\begin{equation}\label{eq-T}
 \mathrm{T}_\epsilon (z):=
\mathrm{R}^{\mathit{II}}_\epsilon (z) - \mathrm{S}_n (z)\,.
\end{equation}
In fact, $\mathrm{T}_\epsilon (z)$ depends also on $n$; without a
danger of confusion we  omit index $n$ here. Note that in view of
Lemma~\ref{le-bounded} operator $ \mathrm{T}_\epsilon (z)$ is
bounded. The next step is to derive expansion of $\mathrm{T}
_\epsilon (z)$ with the respect to $\epsilon$ up to the second
order perturbation term. Employing (\ref{eq-G}) we get
\begin{equation}\label{eq-expTfull}
\mathrm{T}_\epsilon (z) = \mathrm{T}_0 (z)+\mathrm{T}_1 (z)
\epsilon   +\mathrm{T}_2 (z) \epsilon^2+
\mathcal{O}(\epsilon^3)\,,
\end{equation}
where
\begin{equation}\label{eq-defT0}
\mathrm{T}_0(z) =\frac{i}{\pi} \sum _{k\neq n } \frac{1}{\tau_k
(z)} (\omega _{n} \,, \cdot )
\omega_{n}\,,
\end{equation}
and the kernels  of $\mathrm{T}_m (z)$, $m=1\,,2$  are given by
\begin{equation}\label{eq-expT}
\mathrm{T}_m (z) (x_2\,,y_2)=-\frac{i^{m-1}}{m\pi }\sum_{k\in \N}
\tau _k (z)^{m-1}  |x_2 -y_2 |^{m} \omega _k(x_2) \omega
_k(y_2) 
\,.
\end{equation}
Since both $\mathrm{T}_\epsilon (z)$ and $\mathrm{T}_0(z)$ are
bounded the ``perturbant'' $\mathrm{T}_\epsilon (z)-\mathrm{T}_0
(z)$ is bounded as well. Moreover, note that $\mathrm{T}_1
(\cdot)$ does not depend on $z$. Therefore, in the following we
will write $\mathrm{T}_1$.

\begin{lemma} \label{le-Tinv}
Suppose that $\mathcal{U}_n \in \C_+ \cup \Upsilon_n \cup
\Upsilon_- $ defines a small neighborhood of $E_n$ and $z\in
\mathcal{U}_n$. For $\epsilon $ sufficiently small the operator $I
-\alpha \mathrm{T}_{\epsilon} (z)$ is invertible.
\end{lemma}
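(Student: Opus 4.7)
The plan is a Neumann-series perturbation argument around the reference operator $I - \alpha \mathrm{T}_0(E_n)$. Invertibility of $I - \alpha\mathrm{T}_\epsilon(z)$ for $(\epsilon,z)$ close to $(0,E_n)$ will follow from (i) bounded invertibility of $I - \alpha\mathrm{T}_0(E_n)$ and (ii) operator-norm smallness of $\mathrm{T}_\epsilon(z) - \mathrm{T}_0(E_n)$, via the identity
\[
I - \alpha\mathrm{T}_\epsilon(z) = (I - \alpha\mathrm{T}_0(E_n))\bigl[I - \alpha(I - \alpha\mathrm{T}_0(E_n))^{-1}(\mathrm{T}_\epsilon(z) - \mathrm{T}_0(E_n))\bigr],
\]
whose bracketed factor is invertible by the geometric series once the perturbation is small enough.

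For (i), I would exploit that $\mathrm{T}_0(z)$ is diagonal in the orthonormal basis $\{\sqrt{2/\pi}\,\omega_k\}_{k\in\N}$ of $L^2([0,\pi])$: it acts on $\omega_k$ as multiplication by $\lambda_k(z) = \frac{i}{2\tau_k(z)}$ for $k\neq n$ and by $0$ for $k=n$. Then $1 - \alpha\lambda_k(E_n)\neq 0$ for every $k\neq n$ by a short case analysis. For $k\in\mathcal{A}_n$, $\tau_k(E_n)$ is a negative real number, so $\alpha\lambda_k(E_n)$ is purely imaginary and $1-\alpha\lambda_k(E_n)$ has real part $1$. For $k\in\mathcal{A}_n^c\setminus\{n\}$, one has $\tau_k(E_n)=i\sqrt{k^2-n^2+\alpha^2/4}$, and the equality $\tau_k(E_n)=i\alpha/2$ would force $k^2=n^2$, contradicting $k\neq n$; the standing hypothesis $\alpha\neq 2\sqrt{n^2-k^2}$ rules out the boundary situation $k^2=E_n$ in which $\tau_k$ would vanish. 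Since $\tau_k(E_n)\sim ik$ as $k\to\infty$, we obtain $\inf_{k\neq n}|1-\alpha\lambda_k(E_n)|>0$, and $(I-\alpha\mathrm{T}_0(E_n))^{-1}$ is bounded, with norm depending only on $n$ and $\alpha$.

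For (ii), the same spectral representation makes $z\mapsto\mathrm{T}_0(z)$ operator-norm analytic on $\mathcal{U}_n$, because $|\lambda_k(z)|$ is uniformly bounded in $k$ and $z$ and decays like $O(1/k)$. The expansion (\ref{eq-expTfull}) then gives $\|\mathrm{T}_\epsilon(z)-\mathrm{T}_0(z)\|=O(\epsilon)$ uniformly in $z\in\mathcal{U}_n$, provided the coefficient operators $\mathrm{T}_m$ and the $\mathcal{O}(\epsilon^3)$ remainder are genuinely bounded on $L^2([0,\pi])$; Lemma~\ref{le-bounded} already guarantees boundedness of the full sum $\mathrm{T}_\epsilon(z)$. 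Combining (i) and (ii), both factors on the right-hand side of the identity above are invertible once $\epsilon$ and $\mathcal{U}_n$ are taken sufficiently small.

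The main technical obstacle is the uniform operator-norm control in (ii). The kernels in (\ref{eq-expT}) are built from the formally divergent series $\sum_k \tau_k(z)^{m-1}\omega_k(x_2)\omega_k(y_2)$; their boundedness as operators on $L^2([0,\pi])$ relies on the regularising weight $|x_2-y_2|^m$, and turning this formal cancellation into a genuine operator-norm bound that is uniform for $z\in\mathcal{U}_n$ (together with a matching control of the $\mathcal{O}(\epsilon^3)$ remainder) is the nontrivial analytic step that drives the entire argument.
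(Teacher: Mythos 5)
Your proof is correct, and it takes a somewhat different route from the paper's. The paper diagonalizes $\mathrm{T}_0(z)$ in the basis $\{\omega_k\}$ exactly as you do, but then invokes Kato-type eigenvalue perturbation theory for the non-self-adjoint operator $\alpha\mathrm{T}_\epsilon(z)$, argues that its eigenvalues are $\alpha t_k(z)+\mathcal{O}(\epsilon)$ and never equal $1$ on $\mathcal{U}_n$, and concludes only that $\ker(I-\alpha\mathrm{T}_\epsilon(z))=\{0\}$ (bounded invertibility then being implicit, via compactness of $\mathrm{T}_\epsilon(z)$ and the Fredholm alternative). You instead factor out $I-\alpha\mathrm{T}_0(E_n)$, prove its bounded invertibility by the explicit case analysis $k\in\mathcal{A}_n$ versus $k\in\mathcal{A}_n^c\setminus\{n\}$ (your observation that $1-\alpha\lambda_k(E_n)=0$ would force $k^2=n^2$ is exactly the Birman--Schwinger condition that was removed together with $\mathrm{S}_n$), and close with a Neumann series. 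This buys two things the paper's argument leaves informal: since $\mathrm{T}_0(z)$ is normal (diagonal in an orthonormal basis), your resolvent bound $\inf_{k\neq n}|1-\alpha\lambda_k(E_n)|>0$ makes rigorous the uniform-in-$k$ spectral stability that the paper's ``$\alpha t_k(z)+\mathcal{O}(\epsilon)$'' tacitly assumes, and you obtain bounded invertibility with an explicit norm estimate rather than mere injectivity, which is what is actually needed to define $\eta_n$ in (\ref{eq-defeta}). Both proofs rest equally on the operator-norm validity of the expansion (\ref{eq-expTfull}), which the paper asserts without proof, so your relying on it is legitimate; your closing remark correctly identifies it as the real analytic burden. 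One immaterial quibble: for $k\in\mathcal{A}_n$ you call $\tau_k(E_n)$ a negative real number, whereas the paper later uses $\tau_k(E_n)>0$ (its second-sheet sign conventions on the cut are themselves inconsistent); only the reality of $\tau_k(E_n)$ matters for your argument, so nothing breaks.
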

\begin{proof}
Note that for $z\in  \mathcal{U}_n $ the function $z\mapsto \tau_k
(z)$ is analytic.  Using the explicit form of $\mathrm{T}_0
(z)$,~see~(\ref{eq-defT0}), we conclude that its eigenvalues take
the form
$$
t_k (z)=
\begin{cases} \frac{i}{2\tau_k (z)} &\mbox{for } k \neq n\\
0  & \mbox{for }  k=n\,.
\end{cases}
$$
The corresponding eigenfunctions are determined by $\omega _k$ for
$k\in \N$. Combining the above statement together with
(\ref{eq-expTfull}) and results of the perturbation theory,
\cite{Kato},   we get that the eigenvalues of $\alpha
\mathrm{T}_\epsilon (z)$, $z\in \mathcal{U}_n$ take the forms
$$
\alpha t_k (z)
+\mathcal{O}(\epsilon )\,,\,\,\,\, k\in \N\,.
$$
Since the equation
$$
\alpha t_k (z)
+\mathcal{O}(\epsilon ) =1
$$
has no solution for $z\in \mathcal{U}_n$ we can conclude that
$\ker (I-\alpha \mathrm{T}_\epsilon (z))=\{0\} $.
\end{proof}
Relaying on the above lemma we define
\begin{equation}\label{eq-defeta}
 \eta _n (z,\epsilon ):= \tau_{n } (z)-i\frac{\alpha }{\pi
} (\omega _{n}, (I-\alpha \mathrm{T}_{\epsilon}(z))^{-1}\omega
_{n})\,.
\end{equation}

\begin{theorem} \label{th-spectral}
Suppose that $z\in \mathcal{U}_n$. Then
$$
\ker{\Gamma }^{\mathit{II}}_\epsilon (z) \neq \{0\}\quad
\mathrm{iff} \quad \eta_n (\epsilon, z)=0\,.
$$
\end{theorem}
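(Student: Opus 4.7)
The plan is to exploit the rank-one nature of $\mathrm{S}_n(z)$ and factor $\Gamma^{\mathit{II}}_\epsilon(z)$ using the invertibility granted by Lemma~\ref{le-Tinv}.

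First I would write, recalling $\Gamma_\epsilon(z) = I -\alpha \mathrm{R}_\epsilon(z)$ and the decomposition $\mathrm{R}^{\mathit{II}}_\epsilon(z) = \mathrm{S}_n(z) + \mathrm{T}_\epsilon(z)$ from (\ref{eq-T}),
\begin{equation*}
\Gamma^{\mathit{II}}_\epsilon(z) = I - \alpha \mathrm{T}_\epsilon(z) - \alpha \mathrm{S}_n(z) = (I-\alpha \mathrm{T}_\epsilon(z))\bigl(I - (I-\alpha \mathrm{T}_\epsilon(z))^{-1} \alpha \mathrm{S}_n(z)\bigr).
\end{equation*}
By Lemma~\ref{le-Tinv}, the first factor is a bounded invertible operator on $L^2([0,\pi])$ for $z \in \mathcal{U}_n$ and $\epsilon$ small. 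Therefore $\ker \Gamma^{\mathit{II}}_\epsilon(z) \neq \{0\}$ if and only if the second factor has nontrivial kernel.

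Next I would use that $\mathrm{S}_n(z)$ is a rank-one operator: by (\ref{eq-S}), for any $f \in L^2([0,\pi])$,
\begin{equation*}
(I-\alpha \mathrm{T}_\epsilon(z))^{-1}\alpha \mathrm{S}_n(z) f = \frac{i\alpha}{\pi\tau_n(z)}(\omega_n, f)\,\varphi_\epsilon(z),\qquad \varphi_\epsilon(z) := (I-\alpha \mathrm{T}_\epsilon(z))^{-1}\omega_n.
\end{equation*}
For any rank-one operator $f\mapsto (\omega_n,f)\varphi_\epsilon(z)$ prefixed by the scalar $\frac{i\alpha}{\pi\tau_n(z)}$, the operator $I - \frac{i\alpha}{\pi\tau_n(z)}(\omega_n,\cdot)\varphi_\epsilon(z)$ has nontrivial kernel iff the scalar equation
\begin{equation*}
\frac{i\alpha}{\pi\tau_n(z)}(\omega_n,\varphi_\epsilon(z)) = 1
\end{equation*}
holds; this is the standard Fredholm-alternative fact for rank-one perturbations of the identity, and the kernel, when nonempty, is spanned by $\varphi_\epsilon(z)$. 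Multiplying through by $\tau_n(z)$ and transposing yields precisely
\begin{equation*}
\tau_n(z)-\frac{i\alpha}{\pi}(\omega_n, (I-\alpha \mathrm{T}_\epsilon(z))^{-1}\omega_n)=0,
\end{equation*}
which is the condition $\eta_n(z,\epsilon)=0$ from (\ref{eq-defeta}).

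The only point requiring care is that the factor $\tau_n(z)$ in the denominator of $\mathrm{S}_n(z)$ does not vanish on $\mathcal{U}_n$; this follows from the assumption $\alpha \neq 2\sqrt{n^2-k^2}$ (so $E_n$ stays away from all thresholds $k^2$) by shrinking $\mathcal{U}_n$ if necessary, which simultaneously validates multiplying the scalar equation by $\tau_n(z)$. Apart from this bookkeeping, the proof is essentially the rank-one reduction described above; no serious obstacle is expected.
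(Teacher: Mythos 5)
Your proof is correct and follows essentially the same route as the paper: the same factorization $\Gamma^{\mathit{II}}_\epsilon(z) = (I-\alpha \mathrm{T}_{\epsilon}(z))(I-\alpha (I-\alpha \mathrm{T}_{\epsilon}(z))^{-1}\mathrm{S}_n(z))$ via Lemma~\ref{le-Tinv}, followed by the rank-one reduction to the scalar condition $\eta_n(z,\epsilon)=0$. You merely spell out explicitly the Fredholm-alternative step and the nonvanishing of $\tau_n(z)$ on $\mathcal{U}_n$, which the paper treats implicitly through the fixed-point equation for $f$.
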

\begin{proof}
Using  decomposition (\ref{eq-T}) and Lemma~\ref{le-Tinv} we
conclude that for $z\in \mathcal{U}_n$ we have
\begin{equation}\label{eq-}
  \Gamma_\epsilon ^{\mathit{II}}(z) = (I-\alpha \mathrm{T}_{\epsilon} (z)
  )(I-\alpha (I-\alpha \mathrm{T}_{\epsilon }(z))^{-1} \mathrm{S}_n (z)
  )\,.
\end{equation}
Using again the fact that  $\ker  (I-\alpha
\mathrm{T}_{\epsilon}(z))=\{0\}$ we arrive at
$$
\ker   \Gamma_\epsilon ^{\mathit{II}}(z)\neq \{  0 \} \,\,
\Longleftrightarrow \,\, \ker (I-\alpha (I-\alpha
\mathrm{T}_{\epsilon }(z))^{-1} \mathrm{S}_n (z) ) \neq \{0 \}\,.
$$
Applying (\ref{eq-S}) we come to the conclusion that $f\in \ker
(I-\alpha (I-\alpha \mathrm{T}_{\epsilon }(z))^{-1} \mathrm{S}_n
(z) )$ $iff$
$$
f= i\frac{\alpha }{\pi \tau_{n } (z) } (\omega _{n}, f) (I-\alpha
\mathrm{T}_{\epsilon}(z))^{-1}\omega _n\,.
$$
The latter is equivalent to
$$
\eta_n (z, \epsilon ) =0 \,;
$$
this completes the proof.
\end{proof}

\subsection{Solution of the spectral equation}

Once we have the spectral equation our aim is to recover its
solutions which reproduce the resolvent poles. Since the functions
$\mathcal{U}_n \ni z \mapsto \tau_k (z)$, $k\in \N$  are analytic
the spectral condition obtained in Theorem~\ref{th-spectral} is
equivalent to
\begin{equation}\label{eq-spectral2}
\tau_n (z)^2 = -\frac{\alpha^2}{\pi^2} (\omega_n , (I-\alpha
\mathrm{T}_\epsilon (z))^{-1}\omega_n)^2\,,
\end{equation}
cf.~(\ref{eq-defeta}).
\\
\emph{Expansion of $(\omega_n , (I-\alpha \mathrm{T}_\epsilon
(z))^{-1}\omega_n)$. } The next step is to expand the expression
$(\omega_n , (I-\alpha \mathrm{T}_\epsilon (z))^{-1}\omega_n)$
involved in the above equation with respect to $\epsilon $.
Suppose $A$ is invertible with the bounded inverse and $C$ is
bounded with the norm $\|C\|=\mathcal{O}(\epsilon)$. Then
\begin{equation}\label{eq-expA+C}
(A- C)^{-1} = A^{-1}(I+A^{-1}C+(A^{-1}C)^2+... )\,.
\end{equation}
Specify $A:=I-\alpha \mathrm{T}_0  (z)$ and $C:=\alpha (
 \mathrm{T}_\epsilon (z)-\mathrm{T}_0  (z)) $. Combining
 (\ref{eq-expA+C}) and
(\ref{eq-expT}) together with the fact $A\omega _n =\omega _n$ we
get
\begin{eqnarray} \nonumber
&&(\omega_n , (I-\alpha \mathrm{T}_\epsilon (z))^{-1}\omega_n)=
(\omega_n , (I+C+C^2+...)\omega_n)=\\ \nonumber && \frac{\pi }{2}+
\alpha \epsilon (\omega_n , \mathrm{T}_1 \omega_n) + \alpha
\epsilon ^2 (\omega_n , \mathrm{T}_2 (z)\omega_n)
\\
\label{eq-expomega} && + \alpha^2 \epsilon^2
 (\omega_n , \mathrm{T}_1 ^2 \omega_n)
 +\mathcal{O}(\epsilon^3)\,.
\end{eqnarray}
With the above statements we are ready to prove the main theorem.
\begin{theorem} \label{th-main}
Let $\alpha >0$. Suppose $n^2>1+\frac{\alpha ^2}{4}$, i.e. the
number $E_n =-\frac{\alpha ^2}{4} +n^2$ determines the embedded
eigenvalue of $H_{\alpha ,0}$. Then the Hamiltonian $H_{\alpha
,\epsilon}$ has the resolvent pole at
\begin{equation}\label{eq-exppole}
z_n (\epsilon )= E_n+V_n  \epsilon +W_n
\epsilon^2+\mathcal{O}(\epsilon^3)\,,
\end{equation}
where
$$
V_n :=-\frac{\alpha ^3}{\pi } (\omega_n , \mathrm{T}_1
\omega_n)\,,
$$
$$
W_n  :=-\frac{\alpha ^4 }{\pi^2} (\omega_n , \mathrm{T}_1 \omega_n
)^2 -  \frac{\alpha ^3 }{\pi} (\omega_n , \mathrm{T}_2
(E_n)\omega_n) - \frac{\alpha ^4 }{\pi} (\omega_n , \mathrm{T}_1
^2 \omega_n)\,.
$$
\end{theorem}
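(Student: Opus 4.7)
My plan is to combine the spectral reduction from Theorem~\ref{th-spectral} with the expansion already set up in (\ref{eq-expomega}) and extract the coefficients by matching orders of $\epsilon$ after squaring. The equivalent form of the spectral equation to work with is (\ref{eq-spectral2}),
\begin{equation*}
\tau_n(z)^2 \;=\; -\frac{\alpha^2}{\pi^2}\bigl(\omega_n,(I-\alpha \mathrm{T}_\epsilon(z))^{-1}\omega_n\bigr)^2,
\end{equation*}
together with the identity $\tau_n(z)^2 = z-n^2$. I would first check that $(E_n,0)$ solves it: since $\mathrm{T}_0(z)\omega_n=0$, the right-hand side at $\epsilon=0$ reduces to $-\alpha^2/\pi^2\cdot(\pi/2)^2=-\alpha^2/4$, and the left-hand side is $E_n-n^2=-\alpha^2/4$, so the seed $E_n$ is consistent.

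Next, I would argue that a pole genuinely exists in a neighborhood of $E_n$ for all sufficiently small $\epsilon$. The function $(z,\epsilon)\mapsto \eta_n(z,\epsilon)$ is jointly analytic on $\mathcal{U}_n\times(-\epsilon_n,\epsilon_n)$ (the denominator $I-\alpha \mathrm{T}_\epsilon(z)$ is invertible by Lemma~\ref{le-Tinv}), and a direct computation yields
\begin{equation*}
\partial_z\eta_n(E_n,0) \;=\; \frac{1}{2\tau_n(E_n)} \;=\; \frac{1}{i\alpha}\neq 0,
\end{equation*}
so the implicit function theorem delivers a unique analytic branch $z_n(\epsilon)=E_n+V_n\epsilon+W_n\epsilon^2+\mathcal{O}(\epsilon^3)$ with $\eta_n(z_n(\epsilon),\epsilon)=0$. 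In particular, analyticity guarantees the remainder estimate.

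Having existence in hand, I would determine $V_n$ and $W_n$ by substituting the ansatz into the squared equation. The left-hand side becomes
\begin{equation*}
z_n(\epsilon)-n^2 \;=\; -\frac{\alpha^2}{4}+V_n\epsilon+W_n\epsilon^2+\mathcal{O}(\epsilon^3).
\end{equation*}
For the right-hand side I would use (\ref{eq-expomega}) evaluated at $z=z_n(\epsilon)$, noting that $\mathrm{T}_2(z_n(\epsilon))=\mathrm{T}_2(E_n)+\mathcal{O}(\epsilon)$, so that the $\epsilon^2$ coefficient in the quadratic form only needs $\mathrm{T}_2(E_n)$. Squaring the expansion $\pi/2+a\epsilon+b\epsilon^2+\mathcal{O}(\epsilon^3)$ with $a=\alpha(\omega_n,\mathrm{T}_1\omega_n)$ and $b=\alpha(\omega_n,\mathrm{T}_2(E_n)\omega_n)+\alpha^2(\omega_n,\mathrm{T}_1^2\omega_n)$ yields $\pi^2/4+\pi a\epsilon+(\pi b+a^2)\epsilon^2+\mathcal{O}(\epsilon^3)$, and multiplying by $-\alpha^2/\pi^2$ and matching orders produces exactly the claimed formulas for $V_n$ and $W_n$.

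The main technical nuisance, rather than an obstacle, is controlling the remainder uniformly in a neighborhood: one must verify that $\mathrm{T}_\epsilon(z)-\mathrm{T}_0(z)-\mathrm{T}_1\epsilon-\mathrm{T}_2(z)\epsilon^2=\mathcal{O}(\epsilon^3)$ in operator norm uniformly for $z\in\mathcal{U}_n$, so that the Neumann expansion (\ref{eq-expA+C}) truncates at the correct order with an analytic remainder; this in turn legitimizes applying the implicit function theorem and equating Taylor coefficients. All of this rests on the explicit kernel expressions (\ref{eq-G}) and (\ref{eq-expT}) and on the analyticity of $\tau_k(\cdot)$ in $\mathcal{U}_n$, so once the uniform bound on the perturbant is in place the proof reduces to the algebraic matching carried out above.
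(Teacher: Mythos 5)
Your proposal is correct and follows essentially the same route as the paper: reduce to the squared spectral condition (\ref{eq-spectral2}), check that $(E_n,0)$ is a solution, invoke the implicit function theorem via analyticity in $z$ and smoothness in $\epsilon$, and read off $V_n$ and $W_n$ by matching powers of $\epsilon$ using the expansion (\ref{eq-expomega}). The only difference is cosmetic: you verify explicitly that $\partial_z\eta_n(E_n,0)=1/(i\alpha)\neq 0$ and that $\mathrm{T}_2(z_n(\epsilon))$ may be replaced by $\mathrm{T}_2(E_n)$ at order $\epsilon^2$, details the paper leaves implicit.
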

 \begin{proof} Suppose $z\in \mathcal{U}_n $.
 The spectral condition (\ref{eq-spectral2}) reads as
\begin{equation}\label{eq-spectral3}
  \tilde{\eta}_n (z,\epsilon)= 0 \,,
\end{equation}
 where
$$
\tilde{\eta}_n (z,\epsilon) := z-n^2 -\frac{\alpha^2}{\pi^2}
(\omega_n , (I-\alpha \mathrm{T}_\epsilon (z))^{-1}\omega_n)^2\,.
$$
Using the expansion (\ref{eq-expomega}) we get
\begin{eqnarray} \nonumber
\tilde{\eta}_n (z,\epsilon) &&=z-n^2 -\frac{\alpha^2}{4}
-\frac{\alpha
^3}{\pi } (\omega_n , \mathrm{T}_1 \omega_n )\epsilon +\\
\nonumber && \left(-\frac{\alpha ^4 }{\pi^2} (\omega_n ,
\mathrm{T}_1 \omega_n )^2 -  \frac{\alpha ^3 }{\pi} (\omega_n ,
\mathrm{T}_2 (z)\omega_n) - \frac{\alpha ^4 }{\pi} (\omega_n ,
\mathrm{T}_1 ^2 \omega_n) \right)
\epsilon^2+\mathcal{O}(\epsilon^3)\,.
\end{eqnarray}
Note that
$$
\tilde{\eta}_n (E_n ,0 )= 0\,.
$$
Since $\mathcal{U}_n \ni z \mapsto  \tilde{\eta}_n (z\,,\epsilon
)$ is analytic (this follows from the analycity of
$\mathrm{R}_\epsilon ^{\mathit{II}}(\cdot )$) and $ \tilde{\eta}_n
(z\,,\epsilon )$ is $C^\infty$ as the function of $\epsilon$ we
can apply the implicit function theorem in view of which the
equation (\ref{eq-spectral3}) has a unique solution in the
neighborhood $\mathcal{U}_n$ of $E_n$. The expansion
(\ref{eq-exppole}) comes directly from the above expansion of
$\tilde{\eta}_n (z,\epsilon)$ .
 \end{proof}

\section{Geometrically induced resonances: discussion}

\subsection{Imaginary part of the resolvent pole}

Theorem~\ref{th-main} shows that the pole of the resolvent
dissolving originally in the essential spectrum is  slightly
shifted after breaking the symmetry. Since the operator
$H_{\alpha, \epsilon }$ is self-adjoint the pole of its resolvent
can other stay in the essential spectrum or turn to the lower
complex half plane. The aim of this section is to show that the
later holds for $\epsilon $ small enough. The question is
appealing because the imaginary part of the pole determines the
width of resonance.
\\
Note that the error term in (\ref{eq-exppole}) depends on the
index $n$; i.e. for $n\in \N$ there exists $\epsilon_n
>0$ such that for any $\epsilon \in (0,\epsilon_n)$ the term $W_n
\epsilon ^2 $ dominates w.r.t. the error term (in the sense of the
real and imaginary components).
\\
Fix $n\in\N$ and suppose that $\epsilon \in (0,\epsilon_n)$.  Note
that $\mathrm{T}_1 $ is a self-adjoint operator and, consequently,
$V_n$ is real; also the first and third component of $W_n$ (see
(\ref{eq-exppole})) are real. A nontrivial imaginary component of
$z_n (\epsilon)$ of the lowest order is induced by
\begin{eqnarray} \label{eq-imz}
&&(\omega_n , \mathrm{T}_2 (E_n)\omega_n)=
-\frac{i}{2\pi } \sum_{k\in\N} \tau_k (E_n ) N_{k,n}\,,
\end{eqnarray}
where
\begin{eqnarray} \nonumber
N_{k,n}:= && \int_{0}^{\pi} \int_{0}^{\pi} |x_2-y_2|^2 \sin (k
x_2)\sin (n x_2)\sin (k y_2) \sin (n y_2) \mathrm{d}x_2
\mathrm{d}x_y= \\ \nonumber &&
\begin{cases}  
-8 \frac{k^2n^2}{(n^2-k^2)^4} \quad & \quad \mbox{if} \,\, |k-n| \,\, \mbox{odd} \\
0 \quad & \quad \mbox{if} \,\,  |k-n| \,\, \mbox{even}\\
\frac{\pi^4}{12}-\frac{\pi^2}{4 k^2} \quad & \quad \mbox{if} \,\,
k=n \,.
\end{cases}
\end{eqnarray}
Note that not all components of (\ref{eq-imz}) contribute to
imaginary part of $z_n (\epsilon)$. Precisely, $\tau_k (E_n)$  is
purely real for $k\in \mathcal{A}_n $. Otherwise, for $k\in
\mathcal{A}_n^c$ it is purely imaginary and, consequently, the
corresponding components are \emph{not} employed in the imaginary
part of the resonance pole. This, in view of (\ref{eq-imz})
implies
\begin{equation}\label{eq-imaginary}
\Im z_n (\epsilon)= -\frac{4\alpha ^2}{\pi ^3}S_n \epsilon^2
 +\mathcal{O}(\epsilon^3)\,,
\end{equation}
where
$$
S_n :=\sum_{k\in \mathcal{A}'_n} \tau_k (E_n )
\frac{k^2n^2}{(n^2-k^2)^4}\,
$$
and  $\mathcal{A}'_n := \{k\in \mathcal{A}_n \,:\, n-k\,\,
\mathrm{is}\,\, \mathrm{odd}\}$. Since  $\tau_k (E_n ) >0$ for all
$k\in \mathcal{A}'_n$ we obtain $\Im z_n (\epsilon) <0$.
\\
Formula (\ref{eq-imaginary})  explicitly reveals the lowest order
of the resonance width
$$ \gamma_n =  \frac{2\alpha ^2}{\pi ^3} S_n \epsilon^2\,. $$

\subsection{The resonance energy }
The real component of the resonance pole determines  the resonance
energy. Fix $n\in \N$ and suppose $\epsilon $ is small enough. The
lowest order of the real component associated with $z_n
(\epsilon)$ is given by
$$
E_n -\frac{\alpha^3}{\pi} (\omega_n , \mathrm{T}_1 \omega _n
)\epsilon\,.
$$
To derive more explicit expression we calculate
$$
(\omega_n , \mathrm{T}_1 \omega _n ) =  \sum_{k\in \N} M_{k,n}\,,
$$
where
\begin{eqnarray} \nonumber
M_{k,n}:= &&\int_{0}^{\pi} \int_{0}^{\pi} |x_2-y_2| \sin (k
x_2)\sin (n
x_2)\sin (k y_2) \sin (n y_2) \mathrm{d}x_2 \mathrm{d}x_y= \\
\nonumber &&
\begin{cases}  
- \frac{\pi}{4}\frac{n^2+k^2 }{(n^2-k^2)^2}\quad & \quad \mbox{if}
\,\, k\ \neq n
 \\
-\frac{21}{48}\frac{\pi}{k^2}+\frac{11}{6}\pi^3 \quad & \quad
\mbox{if} \,\,  k=n  \,.
\end{cases}
\end{eqnarray}

\subsection{Comments on repulsive potential}

Suppose that the coupling constant $\alpha <0$; i.e. the delta
potential localized on $\Sigma _\epsilon$ has a repulsive
character. Moreover, assume at the beginning that $\epsilon = 0 $.
Note, that in this case, equation (\ref{eq-BSII}) do not admit any
solution (remind that $\Im \sqrt{z-k^2} >0$) and, consequently,
embedded eigenvalues do not appear. However, for $\alpha <0$ the
analogous equation but for the second sheet continuation of the
square root function, i.e.
$$
2 \sqrt{z-k^2}_{\mathit{II}}-i\alpha =0\,
$$
has a solution at $z_k =-\frac{\alpha^2}{4}+k^2 $. Let us
emphasize that the above solution does not determine an eigenvalue
of $H_{\alpha ,0}$. It  states a pole of the resolvent  living on
second sheet. The analogous phenomena occurs for the repulsive one
point interaction in one dimensional system,
cf.~\cite{AGHH},~Chap.~I.3. If we break the symmetry we may expect
that the pole moves slightly but since $H_{\alpha , \epsilon }$ is
a self-adjoint operator the pole has to stay on the second sheet.
In this case it means that its imaginary part is nonnegative.
Indeed, repeating the procedure employed for $\alpha
>0 $ after obvious changes we get
$$
\Im \tilde{z}_n (\epsilon)= -\frac{4\alpha ^2}{\pi ^3}\tilde{S}_n
\epsilon^2
 +\mathcal{O}(\epsilon^3)\,,
$$
where
$$
\tilde{S}_n :=\sum_{k\in \mathcal{A}'_n} \sqrt{E_n +k^2}
_{\mathit{II}} \,\frac{k^2n^2}{(n^2-k^2)^4}\,.
$$
Since $  \sqrt{E_n -k^2} _{\mathit{II}} <0$ we have $\Im
\tilde{z}_n (\epsilon) >0$ for $\epsilon $ small enough.

\subsection{Open questions}

The first natural question arising here concerns the higher
dimensional models. For example, suppose that the geometry of the
waveguide is is defined by $\Omega := \{ (x_1\,,x_2\,,x_3)\,:\,
x_1\,,x_2\in \R\,,x_3\in (0, \pi)\} $ and delta interaction is
supported by $\Sigma_\epsilon :=\{(0, \epsilon x_3, x_3)\}
\subset\Omega $. Since the interaction is supported by a set of
codimension two the model is essentially different from the one
studied in this paper. The resolvent needs a certain kind of the
renormalization and this makes the resonance analysis more
involved.
\\ \\
The other interesting question concerns the spectral properties of
the system if   $\epsilon$ is large. Especially,  if $\epsilon $
goes formally to infinity  then $\Sigma_\epsilon$ is getting
parallel to the walls of $\Omega$. This suggests that the
eigenvalues of $H_{\alpha, \epsilon }$ are  densely localized and
at limiting case we get an  additional component of the essential
spectrum.



\subsection*{Acknowledgement}
The author thanks the referees for reading the paper and
recommending various improvements in exposition.

\end{document}